\documentclass[runningheads, a4paper]{llncs}
\pagestyle{plain} 

\usepackage[affil-it]{authblk}
\usepackage[english]{babel}
\usepackage{amsmath}
\usepackage{amssymb}
\usepackage{mathrsfs}
\usepackage{delarray}
\usepackage{tabularx}
\usepackage[noend]{algpseudocode}
\usepackage{environ}
\usepackage{authblk}
\usepackage{stmaryrd}
\usepackage{calc}
\usepackage{algorithm}
\usepackage{algpseudocode}

\usepackage{graphicx}
\graphicspath{ {./images/} }

\usepackage{xcolor}

\usepackage{tikz}
\usetikzlibrary{arrows, automata, positioning, arrows.meta, calc}


\tikzset{configuration/.style = {state, rectangle, minimum height=0.5cm},}

\let\emptyset\varnothing

\newcommand{\B}{\mathbb{B}}
\newcommand{\N}{\mathbb{N}}

\newcommand{\xor}{\oplus}
\newcommand{\bigxor}{\bigoplus}


\algblock{Input}{EndInput}
\algnotext{EndInput}
\algblock{Output}{EndOutput}
\algnotext{EndOutput}


\title{Fast solutions to $k$-parity and $k$-synchronisation using parallel automata networks}
\date{}

\author{Pac\^ome Perrotin\inst{1}\and 
Eurico Ruivo\inst{1,}\inst{2}\and 
Pedro Paulo Balbi\inst{1,}\inst{2} 
}
\institute{
P\'{o}s-Gradua\c{c}\~{a}o em Engenharia El\'{e}trica e Computa\c{c}\~{a}o
\and
Faculdade de Computa\c{c}\~{a}o e Inform\'{a}tica\\
Universidade Presbiteriana Mackenzie\\
Rua da Consola\c{c}\~{a}o 896, Consola\c{c}\~{a}o; 01302-907 S\~{a}o Paulo, SP -- Brazil
\email{pacome.perrotin@gmail.com, eurico.ruivo@mackenzie.br, pedrob@mackenzie.br}
}

\begin{document}

\maketitle

\begin{abstract}
  We present a family of automata networks that solve the $k$-parity problem when run in parallel.
  These solutions are constructed by connecting cliques in
  a non-cyclical fashion.
  The size of the local neighbourhood is linear in the
  size of the alphabet, and the convergence time is proven to always be the
  diameter of the interaction graph. We show that this family of solutions can
  be slightly altered to obtain an equivalent family of solutions to the
  $k$-synchronisation problem, which means that these solutions converge from
  any initial configuration to the cycle which contains all the uniform
  configurations over the alphabet, in order.
\end{abstract}

\section{Introduction}

Automata networks are general, distributed and discrete models of computation
that are studied for their capacity for complex behaviour despite simple local rules.
They are widely used as models of gene regulatory networks~\cite{Kauffman1969,Thomas1973,Bernot2003}.
Cellular automata are a related family of discrete models with uniform geometry and local rule.
with wide applications for the simulation of real-world phenomena~\cite{CA31}. Examples
inclue disease spread~\cite{CA39}, urban growth~\cite{CA2} or fluid dynamics~\cite{CA54}.

The parity problem is a classical benchmark
problem in artificial intelligence, dating
back to~\cite{balbi10}, where it referred to
the challenge of computing the parity
of a binary sequence without resorting to
scanning the entire sequence.
It has been adapted to a problem over
cellular automata~\cite{balbi11,balbi15,balbi17},
in which solutions have been found using
a specific pair and a specific triplet of 
rules in different positions for both
the parity problem and the synchronisation
problem.
Other solutions exist that allow the local
rule to change over time~\cite{balbi6,balbi7},
which were generalised to the modulo 3 case
in~\cite{balbi18} and the modulo N case
in~\cite{balbi8}, meaning that the decision is about whether the
number of 1s in the configuration is a
perfect multiple of N. Finally,
conditional solutions
using single rules have been found using
rule 60~\cite{balbi12} and rule 150~\cite{balbi14}.

In this paper we present automata network solutions
to particular distributed problems, namely the parity problem and the
synchronisation problem. The former is solved by any Boolean network that
converges,
from any starting configuration, to the uniform configuration made by the parity of the values of the initial configuration.
The latter is solved by any Boolean network that always converges to the limit
cycle containing the uniform configurations $0^n$ and $1^n$ for $n$ the size
of the network.
Both of these
problems are generalised for larger alphabets;
in this case, instead of computing
the parity of the values of the initial configuration which are in
$\{0, 1\}$, the network must converge to the sum modulo $k$ of the values of
the initial configuration which are in $\{0, \ldots, k - 1\}$. We call this
the $k$-parity problem. Similarly, the $k$-synchronisation problem requires the
solution to always converge to a limit cycle of size $k$ containing all of the
uniform configurations. While there is no canonical order in which these
configurations must proceed in the limit cycle, the solutions presented in
this paper will cycle through them in increasing order.

Section~\ref{sec-def} goes through all the background definitions needed. Section~\ref{sec-sol} presents our solution to the 
parity problem, and its equivalent in the synchronisation case, and Section~\ref{sec-lar} shows their generalisations for
alphabets of any finite size. Section~\ref{sec-con} details properties and proof
over the convergence times of the solutions, and describe how it can be
made faster at the cost of the size of local neighbourhoods.

\section{Definitions}
\label{sec-def}

\subsection{Background definitions}

Let $\Sigma$ be a finite alphabet. We call a \emph{configuration} of size $n$ any
vector $x \in \Sigma^n$. For $i \in \{1, \ldots, n\}$, we denote by $x_i$ the
value of the component of index $i$ in $x$. For $I \subseteq \{1, \ldots, n\}$,
we denote $x|_I$ the projection of $x$ over the indexes $I$, which is a vector
over $\Sigma^I$.

A \emph{directed graph} is a pair $G = (V, E)$, where $V$ is the \emph{set of vertices} and
$E \subseteq V \times V$ is the \emph{set of edges}. A set $V' \subseteq V$ is called
a \emph{clique} if and only if $V' \times V' \subset E$, that is, there is an edge
between every pair of vertex in $V'$, including from every edge to itself.

For $G =(V,E)$ a directed graph and $v \in V$, the \emph{neighbourhood of radius $r$} of $v$ is defined as the set of
vertices at distance no more than $r$ from $v$; that is, $v'$ is in the neighbourhood
of radius $r$ of $v$ if and only if there exists a path from $v$ to $v'$ or from $v'$
to $v$ that takes $r$ edges, or less.

\subsection{Automata networks}

An \emph{automata network} (AN) is a function
$F : \Sigma^n \to \Sigma^n$, where $n$ is the size of the network.
Such a network is typically divided into components, called automata;
for any $i \in \{1, \ldots, n\}$, we denote $f_i : \Sigma^n \to \Sigma$
the local function such that $F(x)_i = f_i(x)$.

\begin{example}
  \label{ex-an}
  Let $\Sigma = \{0, 1\}$, and $F$ be an automata network of size $n = 3$
  with local functions $f_1(x) = \neg x_2 \vee x_3$, $f_2(x) = x_1$ and
  $f_3(x) = x_2$.
\end{example}

We call \emph{interaction digraph} of $F$ the digraph with the automata
$\{1, \ldots, n\}$ for nodes and $(u, v)$ is an edge if and only if
there exist $x, x' \in \Sigma^n$ such that
$x_i \neq x'_i \Leftrightarrow i = u$ and $f_v(x) = f_v(x')$.
In other terms,
$x$ and $x'$ are only distinct in node $u$, and that is sufficient
to change the evaluation of $f_v$ over both configurations. It means that the value of the configuration in node $u$ has a decisive role on the evaluation of $v$.
The interaction graph of the network detailed in Example~\ref{ex-an} is
illustrated at the top of Figure~\ref{fig-examples}.

The \emph{parallel dynamics} of $F$,
also called the \emph{configuration space} of $F$, is the graph
with node set $\Sigma^n$, such that $(x, y)$ is an edge if and only if
$F(x) = y$.
A \emph{trap set} of the dynamics is a subset $T \subset \Sigma^n$ such that
$x \in T \Rightarrow F(x) \in T$. We call the minimal trap sets of $F$
the \emph{attractors} of $F$, and the subgraph of the dynamics containing the
attractors is called the \emph{limit dynamics} of $F$.
In the parallel dynamics case, the limit dynamics are a collection of cycles,
called limit cycles. A limit cycle is said to have \emph{size} $c$ if $c$ is the minimal integer such that $F^c(x)=x$ for all $x$ in the limit cycle. In particular, limit cycles of size $1$
are called the \emph{fixed points} of the network.

Given an automata network $F:\Sigma^n\rightarrow \Sigma^n$, $x\in\Sigma^n$ is said to converge to a limit cycle $C$ if there is an integer $t_0$ such that, for any integer $t\geq t_0$, $F^t(x)\in C$.

\begin{example}
  The dynamics of the automata network detailed in Example~\ref{ex-an} are
  illustrated as part of Figure~\ref{fig-examples}. The limit dynamics of
  this network are the fixed point $111$, and the limit cycle of size $3$
  composed of $110$, $011$ and $101$.
\end{example}

\subsection{Network convergence problems}

We now describe the parity, $k$-parity and synchronisation problems.

We say that $F$ \emph{decides} if its limit dynamics are restricted to
the fixed points $\{s^n \mid s \in \Sigma\}$. In the Boolean case where
$\Sigma = \B = \{0, 1\}$, these fixed points are $0^n$ and $1^n$. A deciding
Boolean network converges to either all $0$s or all $1$s from any starting
configuration. More precisely in the Boolean case, we say that $F$
\emph{decides parity} if it decides, and if for any starting
configuration $x$, $F$ converges to $1^n$ if $x$ has an odd number of $1$s,
and converges to $0^n$ otherwise. Note that this implies that $n$ is odd,
as otherwise $1^n$ would itself have an even number of $1$s, and would then
have to converge to $0^n$.

In the more general case in which $\Sigma =\{0,1,\cdots,k-1\}$, $F$ \emph{decides} $k$\emph{-parity} if it decides and any configuration $x\in\Sigma^n$ converges to $\left\{\left(s\right)^n\right\}$, where $s\in\Sigma$ is the $k$-parity of $x$, that is, the sum of the
values of the initial configuration modulo $k$ is $s \in \Sigma$, then convergence should be to $s^n$. In that case, $n$ cannot be a multiple of $k$, otherwise the problem is not well-defined.

We say that $F$ \emph{synchronises} if its limit dynamics are restricted to
one limit cycle of size $|\Sigma|$ containing all the configurations
in $\{s^n \mid s \in \Sigma\}$. For example, a Boolean synchronising network
starting from any configuration will converge to either $0^n$ or $1^n$,
and then swap between them at each step.

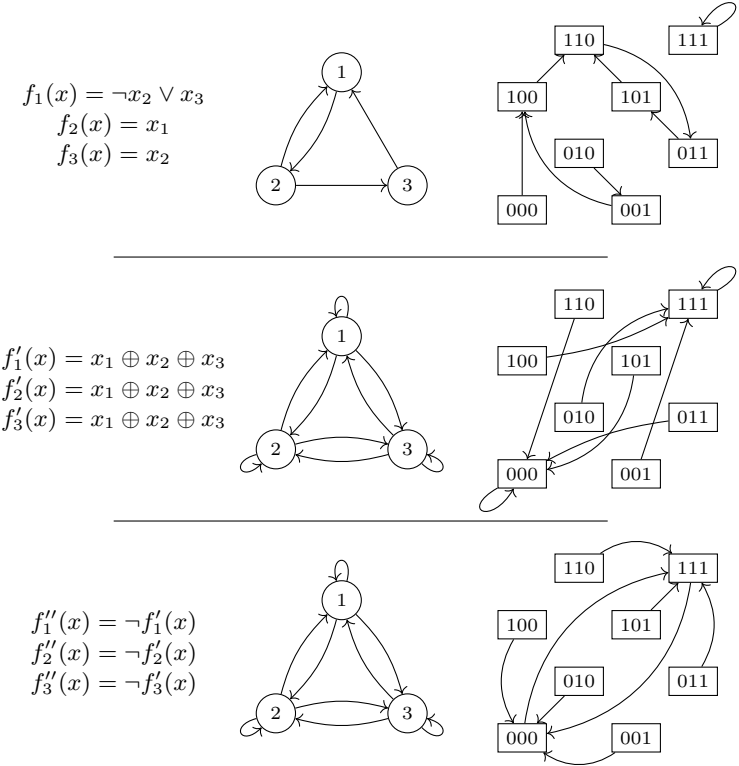
\begin{figure}[t!]
  \begin{center}
    \begin{tikzpicture} [->]


      \def \ftoiD {3}
      \def \itodD {3.5}
      \def \lineD {-3.5}

      \foreach \name\number in {A/0, B/1, C/2} {
        \node (functions\name) at (0, { 0 + \number * \lineD } ) {};
        \node (interaction\name) at (\ftoiD, { 0 + \number * \lineD } ) {};
        \node (dynamics\name) at ({\ftoiD + \itodD}, { 0 + \number * \lineD } ) {};
      }


      \foreach \a\b in {A/B, B/C} {
        \draw[-] ($.5*(functions\a)+.5*(functions\b)$) --
        ($.5*(dynamics\a)+.5*(dynamics\b)$);
      }


      \def \funSpacing {.4}

      \foreach \name\fname\fa\fb\fc in {
        A/f/\neg x_2 \vee x_3/x_1/x_2,
        B/f'/x_1\xor x_2\xor x_3/x_1\xor x_2\xor x_3/x_1\xor x_2\xor x_3,
        C/f''/\neg f'_1(x)/\neg f'_2(x)/\neg f'_3(x)
      } {
        \node (f1\name) at ($(functions\name) + (0,\funSpacing)$)
        {$\fname_1(x) = \fa$};
        \node (f2\name) at (functions\name)
        {$\fname_2(x) = \fb$};
        \node (f3\name) at ($(functions\name) - (0,\funSpacing)$)
        {$\fname_3(x) = \fc$};
      }


      \def \triangleRadius {1}
      \def \minSize {.2}

      \node (triangleOffset) at (0,-.3) {};

      \foreach \name in {A, B, C} {
        \foreach \n\x\y in {
          1/0/1,
          2/-.866/-.5,
          3/.866/-.5
        } {
          \node[state, minimum size = \minSize] (\n\name)
          at ($(interaction\name) + (triangleOffset) + \triangleRadius*(\x,\y)$)
          {\scriptsize \n};
        }
      }


      \def \bend {15}

      \foreach \name in {B, C} {
        \foreach \a\b in {1/2, 2/3, 1/3} {
          \draw
          (\a\name) edge[bend left = \bend] (\b\name)
          (\b\name) edge[bend left = \bend] (\a\name);
        }
      }

      \draw
      (1A) edge [bend left = \bend] (2A)
      (2A) edge [bend left = \bend] (1A)
      (2A) edge (3A)
      (3A) edge (1A);


      \def \wide{30}
      \def \looseness{7}

      \foreach \name in {B, C} {
        \foreach \n\angle in {1/90, 2/210, 3/330} {
          \draw (\n\name) to
          [out={\angle - \wide/2}, in={\angle + \wide/2}, looseness=\looseness]
          (\n\name);
        }
      }


      \def \dynamicsScale {1.5}

      \node (dx) at ($\dynamicsScale*(0,1)$) {};
      \node (dy) at ($\dynamicsScale*(.5,.5)$) {};
      \node (dz) at ($\dynamicsScale*(1,0)$) {};

      \node (base) at ($-.5*(dx)-.5*(dy)-.5*(dz)$) {};

      \foreach \name in {A, B, C} {
        \node (base\name) at ($(base) + (dynamics\name)$) {};
        \foreach \x in {0, 1} {
          \foreach \y in {0, 1} {
            \foreach \z in {0, 1} {
              \node[configuration, minimum size = .3] (\name\x\y\z)
              at ($(base\name) + \x*(dx)+\y*(dy)+\z*(dz)$) {\scriptsize \x\y\z};
            }
          }
        }
      }


      \foreach \a\b in {
        A000/A100, A100/A110, A010/A001, A011/A101, A101/A110,
        B001/B111, B110/B000,
        C010/C000, C101/C111}
      {
        \draw (\a) edge (\b);
      }

      \foreach \a\b in {
        A001/A100, A110/A011,
        B010/B111, B101/B000,
        C111/C000, C000/C111, C001/C000, C110/C111}
      {
        \draw (\a) edge [bend left = 35] (\b);
      }

      \foreach \a\b in {
        C100/C000, C011/C111}
      {
        \draw (\a) edge [bend right = 30] (\b);
      }

      \foreach \a\b in {B011/B000, B100/B111} {
        \draw (\a) edge [bend right = 10] (\b);
      }


      \def \wide{30}
      \def \looseness{7}

      \foreach \name\angle in {A111/45, B000/225, B111/45} {
        \draw (\name) to
        [out={\angle - \wide/2}, in={\angle + \wide/2}, looseness=\looseness]
        (\name);
      }

    \end{tikzpicture}
  \end{center}
  \caption{
    Representations of different examples of automata networks composed of their
    respective
    local functions (left), interaction graph (center)
    and parallel dynamics (right). These automata networks are, respectively, 
    the one detailed in Example~\ref{ex-an} (top), a parity deciding network
    (center), and a synchronising network (bottom).
    \label{fig-examples}
  }
\end{figure}

\section{The solutions}
\label{sec-sol}

\subsection{Their basis}
\label{subsec:basis}

We now define our solution to the $k$-parity decision problem in the Boolean case,
for any odd value of $n$, on a family of networks we refer to by \emph{Clique Trees}, or CTs. Let us define them.

Starting with $n = 1$, we define $CT_1$ as the singleton containing the
network of size $1$, with $F(x) = x$.
For any $n \in \N$, we define the elements $CT_{n + 2}$ as extensions of
the elements of $CT_n$: for $F \in CT_n$, for any $i \in \{1, \ldots, n\}$,
we define $F': \B^{n + 2} \to \B^{n + 2}$ such that:

\begin{itemize}
  \item
    $\forall j \in \{1, \ldots n\}, j \neq i
    \Rightarrow F'(x)_j = F(x|_{\{1, \ldots, n\}})_j$
  \item
    $F'(x)_i = F(x|_{\{1, \ldots, n\}})_i \xor x_{n + 1} \xor x_{n + 2}$
  \item
    $F'(x)_{n + 1} = F'(x)_{n + 2} = x_i \xor x_{n + 1} \xor x_{n + 2}$,

\end{itemize}

\noindent where $\xor$ denotes the XOR function, or sum modulo two, also known as
the parity function.
As even sized solutions are not possible, we also note that $CT_{2n} = \emptyset$ for $n \in \N$. Finally, we collect all the solutions in the set $CT$ such that

\[CT = \bigcup_{n \in \N} CT_n.\]

\begin{example}
  \label{ex-sol11}
  Let $F \in CT_{11}$ be composed of the automaton $\{1, \ldots, 11\}$
  with local functions as follows:

  \begin{align*}
    f_1(x) &= x_1 \xor x_2 \xor x_3 \xor x_8 \xor x_9\\
    f_2(x) &= x_1 \xor x_2 \xor x_3 \xor x_4 \xor x_5\\
    f_3(x) &= x_1 \xor x_2 \xor x_3 \xor x_4 \xor x_5\\
    f_4(x) &= x_2 \xor x_4 \xor x_5\\
    f_5(x) &= x_2 \xor x_4 \xor x_5\\
    f_6(x) &= x_3 \xor x_6 \xor x_7\\
    f_7(x) &= x_3 \xor x_6 \xor x_7\\
    f_8(x) &= x_1 \xor x_8 \xor x_9\\
    f_9(x) &= x_1 \xor x_8 \xor x_9 \xor x_{10} \xor x_{11}\\
    f_{10}(x) &= x_9 \xor x_{10} \xor x_{11}\\
    f_{11}(x) &= x_9 \xor x_{10} \xor x_{11}.
  \end{align*}

  This solution is constructed from the initial network containing $\{1\}$,
  to which we add new automata two-by-two, in order; first
  $2$ and $3$ are added to $1$, then $4$ and $5$ added to $2$, and so on, until $10$ and $11$ are added to $9$. The interaction digraph
  of the resulting automata network is illustrated in Figure~\ref{fig-sol11}.

\end{example}

Note that the AN described in Example \ref{ex-sol11} is only one of all possible ANs in $CT_{11}$, as the vertex at which the new automata are added at each step may vary.

\def \xb{1.0}
\def \xo{-0.5}
\def \yb{0.866}
\def \d{2}

\begin{figure}[t!]
  \begin{center}
    \begin{tikzpicture} [->]

      \foreach \N\x\y\k in {
        A/0/0/1, B/0/1/2, C/-1/0/3, D/0/2/4, E/-1/1/5,
        F/-2/0/6, G/-2/-1/7, H/0/-1/8, I/1/0/9, J/2/1/10, K/2/0/11
      } {
        \node[state, minimum size = .7cm] (\N)
        at ({\x * \d * \xb + \y * \d * \xo}, {\y * \d * \yb}) {\k};
      }

      \def \bend{15}       
      \foreach \a\b\c in {A/B/C, B/D/E, C/F/G, A/H/I, I/J/K} {
        \draw
        (\a) edge[bend left= \bend] (\b)
        (\b) edge[bend left= \bend] (\a)
        (\b) edge[bend left= \bend] (\c)
        (\c) edge[bend left= \bend] (\b)
        (\a) edge[bend left= \bend] (\c)
        (\c) edge[bend left= \bend] (\a);
      }

      \def \angle{30}
      \def \looseness{7}
      \foreach \N\A in {
        A/50, B/50, C/110, D/90, E/210,
        F/150, G/270, H/270, I/110, J/90, K/330
      } {
        \draw (\N) to [out={\A - \angle/2}, in={\A + \angle/2}, looseness=\looseness] (\N);
      }

    \end{tikzpicture}
  \end{center}
  \caption{
    \label{fig-sol11}
    Interaction digraph of the network in $CT_{11}$ detailed in
    Example~\ref{ex-sol11}. Each local function is the XOR function of its
    influencing neighbors.
  }
\end{figure}
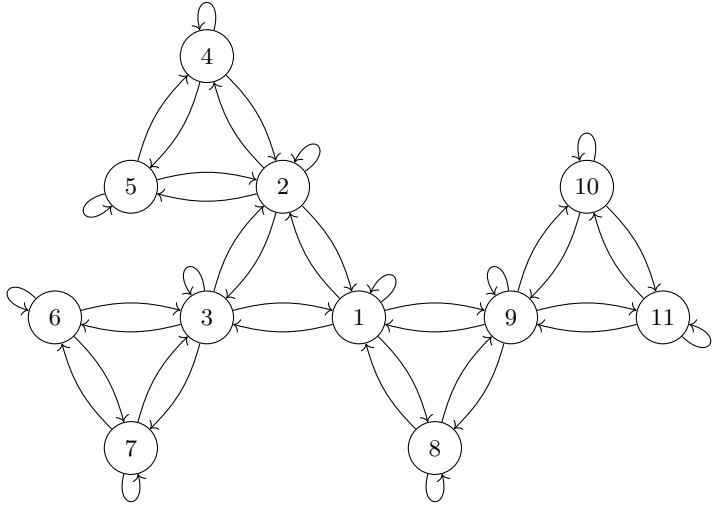

Some preliminary remarks and properties of this solution are that every automaton
is \emph{auto-regulated}, in the sense that, for any $F \in CT$ and $u$ an automaton of $F$, $(u, u)$ is an edge in the interaction digraph, and that every automaton $u$ is influenced by an odd number of automata in $F$. This can be seen by construction. First,
it is true for $F \in CT_1$; second, our construction of $CT_{n+2}$ involves two operations: 
it adds two new automata $n+1$ and $n+2$, both with $3$ interactions,
and it adds both of them as new influences to an automaton $i$, keeping the
amount of influences an odd number. Finally, we note that all influences are both ways;
if $(u, v)$ is an edge in the interaction graph, so is $(v, u)$.

Another set of remarks concern the overall structure of the graph. Our method
of constructing the solutions generates bigger solutions from smaller ones.
It can be seen as growing a solution from $CT_1$; in this process of growth,
it is important to remark that the distance between two automata is never
reduced; that is because while new automata are added at each step of the
construction, these automata are only connected to one automaton of the 
graph and never create any new smaller path within that graph.
As such, when considering any clique $(u, v, w)$,
the rest of the network can be partitioned in sets $V_u, V_v, V_w$ depending
on closest proximity to each automaton in $(u, v, w)$. We remark that there
is no possible ambiguity when doing this decomposition, and that any path
connecting $u$ to any automaton in $V_v$ or $V_w$
must go through $v$ or $w$,
respectively.

\subsection{Solution to $k$-parity}
\label{subsec:$k$-parity}

\begin{theorem}
\label{th-parity}
  All $F \in CT$ decide parity.
\end{theorem}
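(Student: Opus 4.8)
The plan is to exploit the fact that every local function is a XOR of its inputs, so each $F \in CT$ is an $\mathbb{F}_2$-linear map, and to argue everything by induction on the inductive construction of $CT$, with $CT_1$ as the trivial base case. To \emph{decide parity}, $F$ must satisfy two things: its only attractors are the fixed points $0^n$ and $1^n$, and every configuration $x$ converges to $(\bigxor_i x_i)^n$. I would first record three structural facts that hold throughout $CT$ and that reduce the problem to a clean convergence statement, namely that $F$ is linear, that $0^n$ and $1^n$ are fixed points, and that $F$ preserves the global parity $\bigxor_i x_i$.

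The fixed-point and parity claims I would prove together by induction. Since each automaton has odd in-degree (noted in Section~\ref{subsec:basis}), $f_i(0^n)=0$ and $f_i(1^n)=1$, so both uniform configurations are fixed. For parity preservation, writing $y=x|_{\{1,\ldots,n\}}$, the two new automata contribute $F'(x)_{n+1}\xor F'(x)_{n+2}=0$ because they always carry identical values, while the correction $x_{n+1}\xor x_{n+2}$ added to coordinate $i$ is exactly the term already present in the first $n$ coordinates; combined with the inductive hypothesis that $F$ preserves parity, this yields $\bigxor_{j=1}^{n+2}F'(x)_j=\bigxor_{j=1}^{n+2}x_j$.

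The core of the argument, and the step I expect to carry the real weight, is the convergence induction. The key observation is that, by definition, $F'(x)_{n+1}=F'(x)_{n+2}$, so after a single step the two appended automata always hold equal values. Once they are equal their joint contribution $x_{n+1}\xor x_{n+2}$ vanishes, and I would check that this decouples the system: the first $n$ coordinates then evolve exactly under $F$, independently of the two new automata, while each new automaton merely copies coordinate $i$ with a one-step delay. Applying the inductive hypothesis that $F$ decides parity, the first $n$ coordinates converge to $\sigma^n$, where $\sigma$ is the parity of the length-$n$ projection after the first step; the two trailing automata, tracking coordinate $i$, then converge to $\sigma$ as well, so the whole configuration converges to $\sigma^{n+2}$, and no attractor other than $0^{n+2}$ and $1^{n+2}$ can survive.

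It remains only to identify $\sigma$ with the parity $s$ of the original configuration, and here the parity-preservation lemma closes the loop: the first step maps the length-$n$ projection to $F(y)$ with coordinate $i$ possibly flipped by $x_{n+1}\xor x_{n+2}$, and tallying this flip against the preserved parity shows $\sigma=s$. The main obstacle is conceptual rather than computational, namely recognising that the redundancy $F'(x)_{n+1}=F'(x)_{n+2}$ forces equalisation and thereby reduces the $(n+2)$-automaton dynamics to the $n$-automaton dynamics plus a passive delayed copy, so that the inductive hypothesis applies cleanly; once that reduction is in hand, the remaining bookkeeping on parities is routine.
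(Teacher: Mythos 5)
Your proof is correct, but it takes a genuinely different route from the paper. The paper runs an induction on \emph{time}: it proves the invariant $F^r(x)_i = \bigxor x|_{N_r(i)}$, i.e.\ after $r$ steps each automaton holds the XOR of its radius-$r$ neighbourhood in the initial configuration, and the inductive step rests on a delicate geometric counting lemma (each automaton at distance at most $r+1$ from $i$ is counted an \emph{odd} number of times, which relies on the fact that the clique-tree construction never shortens distances). You instead run a structural induction on the \emph{construction} of $CT$, and your key observation --- that $F'(x)_{n+1} = F'(x)_{n+2}$ forces the two appended automata to equalise after one step, after which the correction term $x_{n+1} \xor x_{n+2}$ vanishes, the first $n$ coordinates evolve exactly under $F$, and the twins become a one-step-delayed copy of coordinate $i$ --- is sound, as is your use of the parity-preservation lemma to identify the limit value with the parity of the original configuration. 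Your approach is arguably more modular and avoids the paper's most delicate step entirely. What it gives up is the quantitative content: the paper's invariant immediately yields convergence in exactly the diameter of the interaction graph (which Section~\ref{sec-con} then exploits, giving $D = (n-1)/2$ in the Boolean case), whereas unrolling your induction only gives a bound of roughly $T(F') \le T(F) + 2$ per extension, i.e.\ about $n-1$ steps --- roughly twice the paper's bound --- unless you sharpen the delay analysis. Both arguments generalise to the mod-$k$ case of Section~\ref{sec-lar}, with your twin automata becoming $k$-tuples.
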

\begin{proof}
  Let $F \in CT$ be of size $n$. For $u \in \{1, \ldots, n\}$ and $r \in \N$,
  we denote by 
  $N_r(u)$ the neighbourhood of radius $r$ of $u$. That is, $N_0(u) = \{u\}$,
  and $N_{r + 1}(u)$ is defined as the union of $N_{r}(u)$ with the set of
  automata that influence the automata in $N_{r}(u)$.

  We claim the following: from any starting
  configuration $x \in \B^n$, for any number of steps $r$ and for any
  automaton $i \in \{1, \ldots, n\}$,
  $F^r(x)_i = \bigxor x|_{N_r(i)}$. That is, after $r$ steps, the value of
  every automaton of the network is the sum modulo 2 of their neighbourhood
  of radius $r$ over the initial configuration.

  This is evidently true for $r = 0$, as
  $F^0(x)_i = x_i$ $= \bigxor x|_{\{i\}} = \bigxor x|_{N_r(i)}$.
  We shall now assume the claim to be true for some $r$, and need to prove it
  for $r + 1$.

  By hypothesis, every value in $F^r(x)$ is the
  sum modulo 2 of a neighbourhood of radius $r$ in $x$. Let
  $i \in \{1, \ldots, n\}$ be an automaton, and $N_1(i)$ its immediate
  neighbourhood. When updating $i$, we take the sum modulo 2 of the neighbourhoods of
  size $r$ of every automaton in $N_1(i)$. We will now show that this XOR
  over the neighbourhoods of radius $r$ merge to form a sum modulo 2
  over the neighbourhoods of radius $r + 1$.


  First, let us remark that any automaton at distance $r - 1$ or less
  from $i$ is at distance $r$ from any node in $N_1(i)$; that
  is because every automaton
  in $N_1(i)$ is by definition at distance (at most) $1$ of $i$. Thus,
  when updating $i$, every automaton at distance $r - 1$ is added as
  many times as there are automata in $N_1(i)$, which is always an odd
  number.

  Second, any automaton at distance exactly $r$ from $i$ is also at distance
  $r$ from an odd number of automata in $N_1(i)$. If
  $|N_1(i)| = 3$, then it is at distance $r$ from all three automata in
  $N_1(i)$. If $|N_1(i)| > 3$, then it depends in which direction this
  automaton is; it will be at distance $r - 1$ of an even number of automata
  in $N_1(i)$, exactly at distance $r$ from $i$, and at distance $r + 1$
  from any other neighbour in $N_1(i)$. This is true because of the way the
  graph has been generated; the distances between two given automata is
  never reduced.
  Overall, it means that the values
  of the automaton at distance $r$ of $i$ are added an odd number of times
  in the update of $i$.

  Finally, any automaton at distance $r + 1$ from $i$ must be at distance
  $r$ of exactly one automata in $N_1(i)$, and it gets added exactly once
  in the update of $i$.
  
  Gathering all this, the update of $i$ after $r + 1$ is an addition modulo 2
  of the values in the initial configuration $x$
  of all of the automata at distance $r + 1$ from $i$, each added
  an odd number of times. This cancels out to simply be $\bigxor N_{r + 1}(i)$.

  Now that our claim is proven, simply consider that after a number of steps
  greater than the diameter of the graph, every automaton has converged to
  the sum modulo 2 of the initial configuration,
  which is both uniform and with every node in the state of the parity of the initial configuration. \qed
\end{proof}

\def \xb{1.0}
\def \xo{-0.5}
\def \yb{0.866}
\def \d{2}

\begin{figure}[t!]
  \begin{center}
    \begin{tikzpicture} [->]

      \foreach \N\x\y in {
        A/0/0, B/0/1, C/-1/0, D/1/0, E/0/-1,
        F/0/2, G/1/2, H/-2/0, I/-2/-1, J/-1/-2,
        K/0/-2, L/2/0, M/2/1
      } {
        \node[state, minimum size = .2cm] (\N)
        at ({\x * \d * \xb + \y * \d * \xo}, {\y * \d * \yb}) {};
      }

      \foreach \N\x\y in {
        FA/-1/2, FB/0/3, GA/2/3, GB/2/2,
        HA/-2/1, HB/-3/0, IA/-3/-2, IB/-2/-2,
        JA/-2/-3, JB/-2/-2, KA/0/-3, KB/1/-2,
        LA/2/-1, LB/3/0, MA/2/2, MB/3/2
      } {
        \node (\N) at ({\x * \d * \xb + \y * \d * \xo}, {\y * \d * \yb}) {};
      }

      \foreach \N in {F,G,H,I,J,K,L,M} {
        \foreach \M in {A,B} {
          \node (\N\M\M) at ($(\N)!0.75!(\N\M)$) {};
          \draw[densely dotted, -, thick]
          (\N) edge (\N\M\M);
        }
      }

      \def \bend{15}       
      \foreach \a\b\c in {A/B/C, A/D/E, B/F/G, C/H/I, E/J/K, D/L/M} {
        \draw
        (\a) edge[bend left= \bend] (\b)
        (\b) edge[bend left= \bend] (\a)
        (\b) edge[bend left= \bend] (\c)
        (\c) edge[bend left= \bend] (\b)
        (\a) edge[bend left= \bend] (\c)
        (\c) edge[bend left= \bend] (\a);
      }

      \def \angle{40}
      \def \looseness{7}
      \foreach \N\A\C in {
        A/60/5, B/0/3, C/300/3, D/120/3, E/180/3,
        F/240/1, G/300/1, H/60/1, I/0/1, J/120/1,
        K/60/1, L/240/1, M/180/1
      } {
        \draw (\N) to [out={\A + \angle/2}, in={\A - \angle/2}, looseness=\looseness] (\N);
        \path (\N) -- +(180+\A:0.35cm) node {\scriptsize $\times \C$};
      }

      \foreach \N in {A, B, C, D, E} {
        \draw[dashed] (\N) circle (1.25*\d);
      }
    \end{tikzpicture}
  \end{center}
  \caption{
    \label{fig-proof}
    Illustration of the recursion argument in the proof of Theorem~\ref{th-parity},
    after one iteration of the network.
    Labels such as $\times 3$ indicate that this particular automaton is to be counted
    $3$ times in the next evaluation of the center automaton, which itself
    is labelled by $\times 5$.
    By hypothesis, every automaton in the
    network has as value the sum modulo $2$ of their neighbourhood of radius $1$,
    illustrated here as dashed circles. In the next update, the center cell
    will thus take the sum modulo $2$ of all these neighbourhoods. As each
    automaton in the neighborhood of this sum is counted in an odd number of
    neighbourhoods, this overall sum turns out to be the sum modulo $2$ over the neighbourhood
    of radius $2$ of the center automaton over the initial configuration. The
    conclusion of this argument is that after enough updates, every automaton will
    be evaluated as the sum modulo $2$ of a neighbourhood that spans the entire
    network, proving the result.
  }
\end{figure}
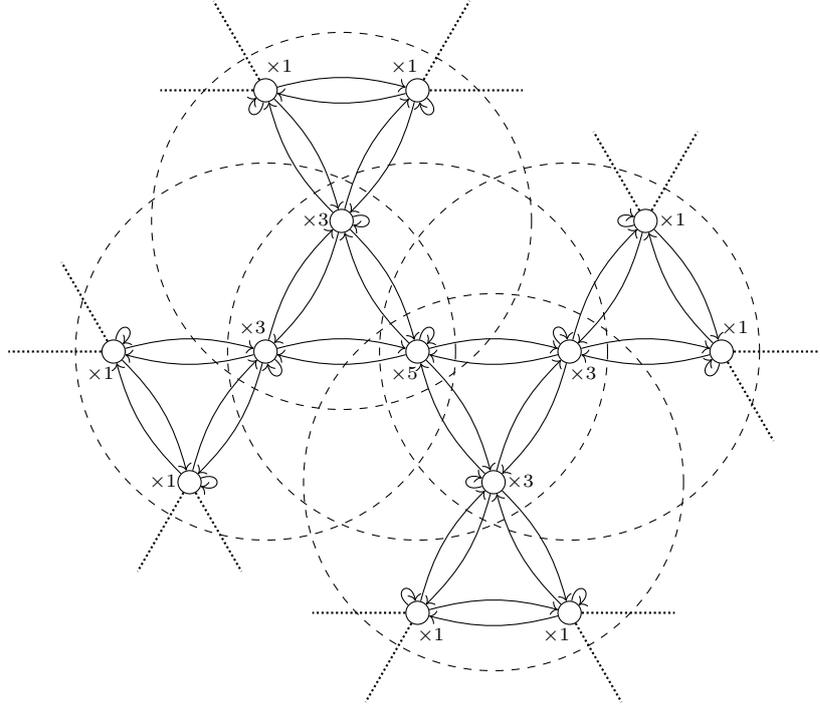

Figure~\ref{fig-proof} illustrates the main argument of the proof of Theorem~\ref{th-parity}.
From this proof we can directly extract the convergence time of the network,
which is always less or equal than the diameter of its interaction graph.
This is clearly optimal over these particular interaction graphs, as
any amount of time shorter than the diameter prevents information from getting
from one end to the other. Further arguments over the convergence time of the
solutions can be found in Section~\ref{sec-con}.
The temporal evolution of a solution of size 289 is illustrated in
Figure~\ref{fig-ca-parity}.

\begin{figure}
  \begin{center}
    \includegraphics{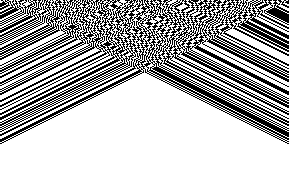}
  \end{center}
  \caption{
    \label{fig-ca-parity}
    Temporal evolution of a parity solution of size 289, with convergence time 144.
    This image is obtained from a solution in $CT_{289}$ which is constructed
    as a single chain of cliques, and to flatten its graphical representation
    into a line. Each line in the image represents a configuration of the
    network, where white means $1$ and black means $0$. Time goes down.
  }
\end{figure}

Our second result is that any deciding Boolean network that is negation
invariant provides a simple solution to the synchronisation problem.
A network is \emph{negation invariant} if $\neg F(x) = F( \neg x )$, for
all $x \in \B^n$. Notice that both of these conditions are always true in the
case of a parity solution built on the XOR function.

\subsection{Solution to $k$-synchronisation}
\label{subsec:k-synch}

\begin{theorem}
  If $F$ is a deciding automata network such that $\neg F(x) = F( \neg x )$,
  then $F' = \neg \circ F$ is a solution to the synchronisation problem.
\end{theorem}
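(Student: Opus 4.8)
The plan is to reduce the dynamics of $F'$ to those of $F$ by exploiting negation invariance. The first step I would carry out is to compute the second iterate of $F'$. Writing $F'(x) = \neg F(x)$ and applying the hypothesis $F(\neg y) = \neg F(y)$ with $y = F(x)$, one gets $(F')^2(x) = \neg F(\neg F(x)) = \neg \neg F(F(x)) = F^2(x)$. Thus $(F')^2 = F^2$ exactly: applying $F'$ twice is identical to applying $F$ twice, and by iteration $(F')^{2k} = F^{2k}$ for every $k$. This single identity is the engine of the whole argument.

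Next I would identify the candidate limit cycle. Since $F$ decides, $0^n$ and $1^n$ are fixed points of $F$, so $F'(0^n) = \neg F(0^n) = \neg 0^n = 1^n$ and symmetrically $F'(1^n) = 0^n$. Hence $F'$ swaps $0^n$ and $1^n$, and $\{0^n, 1^n\}$ is a trap set of $F'$ realising a cycle of size $2$ that contains both uniform configurations -- exactly the cycle that the synchronisation problem asks for.

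The heart of the proof is then to show that every configuration converges to this cycle under $F'$ and that nothing else survives in the limit dynamics. Fix any $x \in \B^n$. Because $F$ decides, there are $t_0$ and $s \in \B$ with $F^t(x) = s^n$ for all $t \geq t_0$. For even times $t = 2k$ with $2k \geq t_0$, the identity $(F')^{2k} = F^{2k}$ gives $(F')^{2k}(x) = s^n$; for the following odd time, $(F')^{2k+1}(x) = F'(s^n) = (\neg s)^n$. So from step $t_0$ onward the orbit of $x$ under $F'$ alternates between $s^n$ and $(\neg s)^n$, i.e. it lies in and cycles through $\{0^n, 1^n\}$. This proves simultaneously that $x$ converges to the cycle $C = \{0^n, 1^n\}$ and that $C$ is the only attractor, so the limit dynamics of $F'$ reduce to this single size-$2$ cycle; by definition $F'$ synchronises.

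I do not expect a genuine obstacle here: once the identity $(F')^2 = F^2$ is in hand, everything follows mechanically. The only point needing a little care is the bookkeeping between even and odd iterates -- checking that the odd-step values really are the negations $(\neg s)^n$ and that this is consistent with $F'$ swapping the two fixed points -- and, more pedantically, confirming that \emph{deciding} supplies genuine pointwise convergence $F^t(x) = s^n$ for all large $t$, which it does, since in a finite deterministic parallel system every orbit is eventually periodic and the only admitted attractors are the two uniform fixed points.
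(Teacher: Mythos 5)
Your proposal is correct and follows essentially the same route as the paper's proof: the key identity $(F')^2 = \neg F \circ \neg F = F^2$ obtained from negation invariance, the observation that $F^2$ (hence $(F')^2$) decides, and the fact that $F'$ acts as the negation on the uniform configurations, swapping $0^n$ and $1^n$. Your version merely spells out the even/odd-step bookkeeping that the paper leaves implicit.
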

\begin{proof}
  Since $\neg F(x) = F( \neg x )$, we have that 
 \begin{center}
     $F'( F' (x) ) = \neg F( \neg F ( x ) ) = \neg \neg F( F(x)) = F(F(x))$.
  \end{center}
  As $F$ decides, $F \circ F$ also decides, which means that
  $F' \circ F'$ decides, which means that it always converges.
  Note that on $0^n$ or $1^n$, $F'$ is simply the
  negation, and thus it synchronises. \qed
\end{proof}

This theorem allows us to adapt our previous $CT$ solution to the
synchronisation problem simply by taking any $F \in CT$ and adding a
negation on all local functions, providing solutions for any odd value of $n$.
Solutions for even $n$ can easily be provided by adding an extra automaton
that takes the negation of any other automaton, at the price of breaking
the strong connectivity of the network.
The temporal evolution of a solution of size 289 is illustrated in
Figure~\ref{fig-ca-synchronisation}.


\begin{figure}
  \begin{center}
    \includegraphics{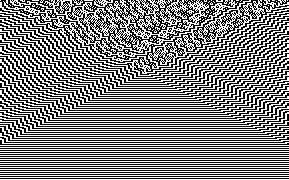}
  \end{center}
  \caption{
    \label{fig-ca-synchronisation}
    Temporal evolution of a synchronisation solution of size 289, with convergence time 144.
  }
\end{figure}

\section{Generalisation to arbitrary alphabet sizes}
\label{sec-lar}

We now prove that our construction generalises well to any finite alphabet
$\Sigma$. Let $k=|\Sigma|$, we call
$CT^k$ the families of solution that solve the $k$-parity problem.
This generalisation rests on the fact that the XOR function is also the 
sum modulo $2$; for an alphabet of size $k$, we will instead use the sum
modulo $k$ as the local function.
Where we added $2$ automata for each new step of
the construction of $CT^2_{n + 2}$, in general we will add $k$
automata in the construction of $CT^k_{n + k}$, while $CT^k_1$ stays
the identity network of size $1$ for all $k$.
The interaction graph of a such a solution over the ternary alphabet
is illustrated in Figure~\ref{fig-sol16}.

\def \xb{1.0}
\def \xo{-0.5}
\def \yb{0.5}
\def \d{2.5}

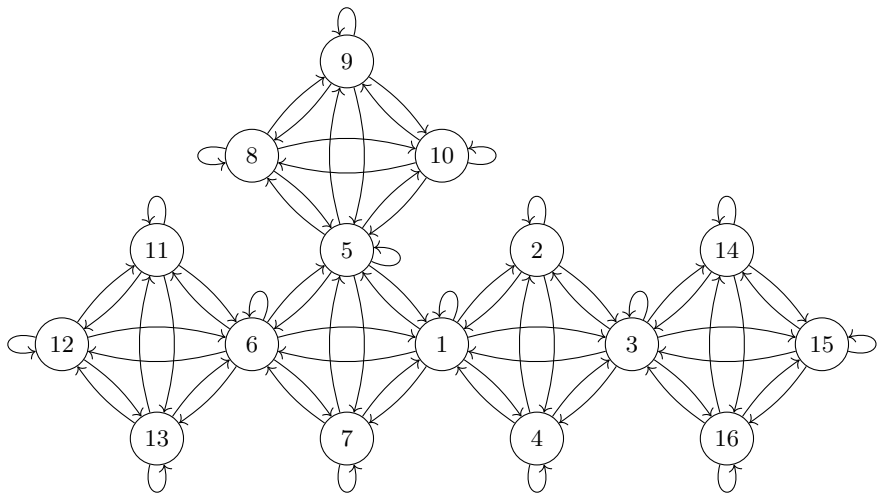
\begin{figure}[t!]
  \begin{center}
    \begin{tikzpicture} [->]

      \foreach \N\x\y\k in {
        A/3/4/9, B/2/3/8, C/3/3/10, D/1/2/11,
        E/2/2/5, F/3/2/2, G/4/2/14, H/0/1/12,
        I/1/1/6, J/2/1/1, K/3/1/3, L/4/1/15,
        M/0/0/13, N/1/0/7, O/2/0/4, P/3/0/16
      } {
        \node[state, minimum size = .7cm] (\N)
        at ({\x * \d * \xb + \y * \d * \xo}, {\y * \d * \yb}) {\k};
      }

      \def \innerbend{15}
      \def \outerbend{10}
      \foreach \a\b\c\d in {A/C/E/B, D/I/M/H, E/J/N/I, F/K/O/J, G/L/P/K} {
        \foreach \n\m in {\a/\c, \b/\d} {
          \draw
          (\n) edge[bend left= \innerbend] (\m)
          (\m) edge[bend left= \innerbend] (\n);
        }
        \foreach \n\m in {\a/\b, \b/\c, \c/\d, \d/\a} {
          \draw
          (\n) edge[bend left= \outerbend] (\m)
          (\m) edge[bend left= \outerbend] (\n);
        }
      }

      \def \angle{30}
      \def \looseness{7}
      \foreach \N\A in {
        A/90, B/180, C/0, D/90,
        E/350, F/90, G/90, H/180,
        I/80, J/80, K/80, L/0,
        M/270, N/270, O/270, P/270
      } {
        \draw (\N) to [out={\A - \angle/2}, in={\A + \angle/2}, looseness=\looseness] (\N);
      }

    \end{tikzpicture}
  \end{center}
  \caption{
    \label{fig-sol16}
    Interaction digraph of a solution in $CT^3_{16}$.
    Each local function is the sum modulo 3 of its
    influencing neighbours. The automata are numbered in one of the
    possible orders of addition that construct this solution.
  }
\end{figure}

In this generalisation, deciding means always converging to any fixed point
$s^n$ for some $s \in \Sigma$. The generalisation of the parity problem is
now the $k$-parity problem; to solve it, a network with alphabet $\Sigma$
must always converge to the fixed point with symbol the sum modulo $k$ of
the values of the initial configuration.


The generalisation of the synchronisation problem is the $k$-synchronisation
problem, where instead of oscillating between
$0^n$ and $1^n$, a $k$-synchronising network over an alphabet of size $k$ must
cycle through the $k$ previously described fixed points. While our construction
cycles through these fixed points in increasing order,
the order of this
cycle is unimportant, as a solution with a different order can always be
converted to a solution in any order.

\begin{theorem}
  All $F \in CT^k$ solve the $k$-parity problem, i.e., the sum modulo $k$ decision problem.
\end{theorem}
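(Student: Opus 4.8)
The plan is to mirror the proof of Theorem~\ref{th-parity} almost verbatim, replacing ``sum modulo $2$'' with ``sum modulo $k$'' throughout, and then to isolate exactly the one place where the argument genuinely used $k=2$ so that it can be patched for general $k$. Concretely, I would first re-establish the same invariant: for any $F \in CT^k$ of size $n$, any starting configuration $x \in \Sigma^n$, any $r \in \N$, and any automaton $i$, we have $F^r(x)_i = \bigxor x|_{N_r(i)}$, where now $\bigxor$ denotes summation modulo $k$ (I would either rename this or state explicitly that $\bigxor$ is reinterpreted as $\sum \bmod k$, since the paper does not define a separate macro). The base case $r=0$ is identical. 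For the inductive step I would carry over the three-part distance analysis unchanged in its combinatorial content: automata at distance $r-1$ or less from $i$ are counted once for each member of $N_1(i)$; automata at distance exactly $r$ are counted an odd number of times; and automata at distance $r+1$ are counted exactly once. The structural facts licensing this counting (auto-regulation, all edges bidirectional, odd in-degree, distances never decreasing under the construction) were established for the general construction in Section~\ref{subsec:basis}, so I may invoke them directly.

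The crucial divergence is the following. In the Boolean proof, the phrase ``added an odd number of times... cancels out to simply be $\bigxor N_{r+1}(i)$'' relied on the fact that modulo $2$, adding a value an odd number of times is the same as adding it once. Modulo $k$ this is false: counting a value $m$ times contributes $m$ copies, not one, unless $m \equiv 1 \pmod k$. So the multiplicities that were harmless mod $2$ must now be controlled mod $k$. I expect this to be the main obstacle, and I would resolve it by refining the multiplicity bookkeeping rather than merely asserting odd counts. The key observation is that $|N_1(i)|$ is not just odd but equals $1$ modulo $k$: by construction each automaton starts with a self-loop (in-degree $1 \equiv 1$) and every growth step adds $k$ new neighbours at once (the clique of size $k$ attached at a vertex), so every in-degree is congruent to $1$ modulo $k$. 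I would then show, direction by direction through the tree of cliques, that each automaton at distance $r+1$ from $i$ is reached along a unique branch and hence contributes with total multiplicity $\equiv 1 \pmod k$, while the interior automata (distance $\le r$) are each counted a number of times congruent to $|N_1(i)| \equiv 1 \pmod k$. This is precisely the step I would write out in full, as it is where the generalisation has real content.

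Having established the invariant, the conclusion is immediate and identical to the Boolean case: for $r$ at least the diameter of the interaction graph, $N_r(i)$ is the entire vertex set for every $i$, so $F^r(x)_i = \sum_{j=1}^n x_j \bmod k$ for all $i$ simultaneously. This is the uniform configuration $s^n$ with $s$ the $k$-parity of $x$, which is a fixed point, so $F$ decides $k$-parity and converges within the diameter. I would close by remarking, as the earlier text does, that $n$ must not be a multiple of $k$ for the problem to be well-posed, which is automatic from the construction since each growth step adds exactly $k$ and the base size is $1$, giving $n \equiv 1 \pmod k$.

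Thus the proof is structurally a copy of Theorem~\ref{th-parity}, with the single genuine novelty being the upgrade from ``odd multiplicity suffices mod $2$'' to ``multiplicity $\equiv 1 \pmod k$ suffices mod $k$,'' underpinned by the fact that the clique-tree construction over alphabet size $k$ forces every in-degree and every relevant branch-multiplicity to be congruent to $1$ modulo $k$.
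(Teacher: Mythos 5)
Your proposal is correct and follows essentially the same route as the paper: the paper's own proof is a terse adaptation of the binary argument whose key assertion is that each neighbour within radius $r+1$ is ``counted $m \times k + 1$ times, for $m$ some positive integer,'' which is exactly your multiplicity-$\equiv 1 \pmod{k}$ bookkeeping (in-degrees $1 + ck \equiv 1$, gateway automata counted once per branch). Your write-up is in fact more explicit than the paper's about why these multiplicities are congruent to $1$ modulo $k$, but it is the same invariant, the same induction, and the same diameter-based conclusion.
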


\begin{proof}
  This proof follows the same arguments as the binary case; given $x$ a starting
  configuration, initially every automaton computes the sum modulo $k$ of
  its neighbourhood of radius $0$.
  
  Let us suppose that for values up to some $r$,
  all automata compute the sum modulo $k$ of the initial configuration
  over their neighbourhood of radius $r$
  after $r$ steps. By the construction of the interaction digraph,
  the next update of all automata
  will result in a sum modulo $k$
  of the value in the initial configuration of their neighbours of distance
  up to $r + 1$,
  each neighbour being counted $m \times k + 1$ times, for $m$
  some positive integer.

  As a result, after $r$ steps, every automaton computes the sum modulo $k$ of
  the values in the initial configuration of their neighbourhood of radius $r$. After
  $r = n$ steps, the network has solved the $k$-parity problem.
\end{proof}

The temporal evolution of a solution to the 3-parity problem over a configuration
of size 289 is illustrated in Figure~\ref{fig-ca-3arity}.

\begin{figure}
  \begin{center}
    \includegraphics{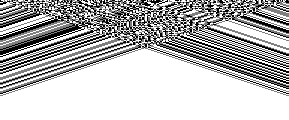}
  \end{center}
  \caption{
    \label{fig-ca-3arity}
    Temporal evolution of a 3-parity solution of size 289, with convergence time 96, flowing downwards. White means $2$, gray means $1$ and black means $0$.
  }
\end{figure}

Similarly, our argument that transformed the parity solution into a
synchronisation solution also works for any finite alphabet; but instead of
the invariance to be over the negation, we now require it to be over the
$+1$ operator.

\begin{theorem}
  If $F$ decides and is $+1$ invariant, then $F' = (+1) \circ F$ is a solution
  to the $k$-synchronisation problem.
\end{theorem}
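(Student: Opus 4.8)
The plan is to mirror the proof of the binary synchronisation theorem, replacing the negation (an involution) by the $+1$ operator, which instead has order $k$. Write $\sigma = (+1)$ for the componentwise successor map modulo $k$, so that $F' = \sigma \circ F$. The two facts driving the argument are that $\sigma^k = \mathrm{id}$ (adding $1$ modulo $k$ exactly $k$ times returns every component to its original value) and that $+1$ invariance is precisely the commutation relation $\sigma \circ F = F \circ \sigma$.

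First I would compute the $k$-th iterate of $F'$. Using the commutation relation to push every copy of $\sigma$ past every copy of $F$, one shows by induction on $j$ that $F'^{\,j} = \sigma^{j} \circ F^{\,j}$; in particular $F'^{\,k} = \sigma^{k} \circ F^{\,k} = F^{\,k}$, since $\sigma^k = \mathrm{id}$. This is the exact analogue of the identity $F' \circ F' = F \circ F$ from the Boolean case, and it is the technical heart of the argument.

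Next I would transfer the deciding property from $F$ to $F'^{\,k}$. Since $F$ decides, its limit dynamics are exactly the fixed points $\{s^n \mid s \in \Sigma\}$; as a configuration is recurrent under $F^{\,k}$ if and only if it is recurrent under $F$, the limit dynamics of $F^{\,k}$ are again exactly these fixed points, so $F^{\,k}$ also decides. By the previous step $F'^{\,k} = F^{\,k}$, hence $F'^{\,k}$ decides as well, and so from any configuration $x$ the orbit under $F'^{\,k}$ converges to some uniform fixed point $s^n$.

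Finally I would read off the behaviour of $F'$ on the uniform configurations and close the loop. Because $F$ decides, each $s^n$ is a fixed point of $F$, so $F'(s^n) = \sigma(F(s^n)) = \sigma(s^n) = (s+1)^n$; thus $F'$ sends $s^n$ to $(s+1)^n$ and cycles through all $k$ uniform configurations in increasing order. Combining this with the convergence of $F'^{\,k}$ to some $s^n$: once an iterate $F'^{\,km}(x)$ equals $s^n$, the subsequent iterates run through $(s+1)^n, (s+2)^n, \dots$, so $x$ converges under $F'$ to the single limit cycle $\{0^n, 1^n, \dots, (k-1)^n\}$. This cycle has size $k$ and contains every uniform configuration, which is exactly the requirement for $F'$ to $k$-synchronise. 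The step I expect to require the most care is the inductive commutation $F'^{\,j} = \sigma^j \circ F^{\,j}$ and its specialisation at $j = k$, since everything downstream rests on $\sigma^k = \mathrm{id}$ collapsing the accumulated shifts; the remaining steps are routine transfers of the deciding property and a direct inspection of the action of $F'$ on the uniform configurations.
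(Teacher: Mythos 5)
Your proof is correct and follows essentially the same route as the paper's: use the $+1$ invariance to show $F'^{\,k} = F^{\,k}$, transfer the deciding property to conclude $F'^{\,k}$ converges to a uniform fixed point, and then observe that $F'$ maps $s^n$ to $(s+1)^n$ so the orbit falls into the size-$k$ cycle of uniform configurations in increasing order. The only difference is one of rigour: you make explicit the induction $F'^{\,j} = \sigma^j \circ F^{\,j}$ and the final assembly step, which the paper's very terse proof leaves implicit.
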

\begin{proof}
  This proof follows the same arguments as the binary case: the $k$ repetition of
  $F'$ is equivalent to the $k$ repetition of $F$ thanks to the invariance hypothesis.
  This means that since $F$ converges, $F^k$ also converges, and so $F'^k$
  converges to a uniform configuration; this also implies that $F'$ always
  converges to the cycles that go through every uniform fixed point in increasing order.
\end{proof}

The temporal evolution of a synchronisation solution over the ternary alphabet and
a configuration of size 289 is illustrated in Figure~\ref{fig-ca-3sync}.

\begin{figure}
  \begin{center}
    \includegraphics{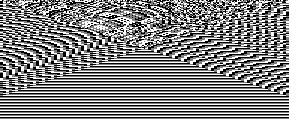}
  \end{center}
  \caption{
    \label{fig-ca-3sync}
    Temporal evolution of a 3-synchronisation solution of size 289, with convergence time 96, flowing downwards. White means $2$, gray means $1$ and black means $0$.
  }
\end{figure}

Whereas our the binary solution can be described as a collection
fully connected cliques of size $3$ lightly connected to each other,
the $k$-parity solution can be described
as a collection of fully connected cliques of size $k + 1$ lightly connected to each other.


\section{Convergence times}
\label{sec-con}

\subsection{Direct convergence}

As a final analysis point, we present some facts about the convergence times
of the presented solutions.

Central to the convergence proof is the fact that each automaton computes
the sum modulo $k$ of its neighbourhood of radius $r$ after $r$ steps; this
means that every automaton will have converged as soon as $r$ is large enough
to include the entire graph. This value is exactly the diameter of the graph.

The interaction digraph of any solution in $CT^k$ could be described as a
collection of cliques of size $k + 1$ connected to
each other through their nodes such as they form a tree; for any $n$ and $k$,
the solution in $CT^k_n$ with the maximal diameter is the solution composed
of cliques that together form a line. Let us call $C$ the number
of such components and $D$ the diameter.
In the case of a line, the diameter
is always the number of cliques: $D = C$.
Every connected component is of size $k + 1$.
By construction
every connected component shares a node with the next, except for the last one;
this means that overall, the network has $n = (C - 1) \times k + (k + 1)$ automata.
We can infer the following:

\begin{align*}
  n &= (C - 1) \times k + (k + 1) = \\
  &= (D - 1) \times k + k + 1 = \\
  &= D \times k + 1,\\
\end{align*}

\noindent which implies that $D = \frac{n - 1}{k}$. In the binary case this means that
the convergence times tends to $\frac{n}{2}$ as $n$ grows larger.
Notice also that the convergence time decreases
as $k$ grows larger for a given $n$; this is explained by the fact that
a larger $k$ implies larger cliques which means that
the network is more connected and has a smaller diameter relative to its size.
This fact can be exploited to increase convergence time for a given $k$.

\subsection{Trading component size for convergence time}

For a given $k$, we can decrease convergence time as close to $0$ as desired
by first constructing a solution for an alphabet of size a multiple of $k$,
and then projecting that solution unto the alphabet of size $k$. By construction,
this solution will converge faster, and it can be checked that it stills solves
the sum modulo $k$ problem. For $m \in \N$, we denote
$\pi_{mk,k} : \{0, \ldots, mk - 1\} \to \{0, \ldots, k - 1\}$ the projection which maps
any $v$ to the reminder of the Euclidean division of $v$ by $k$.


\begin{theorem}
  If $F \in CT^{mk}$, then $\pi_{mk, k} \circ F \in CT^k$.
\end{theorem}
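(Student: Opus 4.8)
The plan is to reduce the whole statement to the single observation that $\pi_{mk,k}$ is a homomorphism of additive groups, since $k \mid mk$ gives $(a \bmod mk) \bmod k = a \bmod k$ for every integer $a$. First I would record the componentwise consequence that projection commutes with one application of $F$, up to projecting the input: for every $y \in \{0,\dots,mk-1\}^n$ and every node $v$, the local function $F(y)_v$ is a sum modulo $mk$ of the entries of $y$ over $N_1(v)$, so reducing those entries modulo $k$ before summing leaves the result unchanged modulo $k$, that is, $\pi_{mk,k}(F(y)) = \pi_{mk,k}(F(\pi_{mk,k}(y)))$. Writing $G = \pi_{mk,k}\circ F$ for the projected network on the restricted domain $\{0,\dots,k-1\}^n$, an immediate induction on $r$ using this identity yields $G^r = \pi_{mk,k}\circ F^r$, which is the technical heart of the argument: iterating the projected network is the same as iterating $F$ and projecting only at the end.

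The second step transports the neighbourhood formula already proved for the larger alphabet. Applying the preceding theorem to $CT^{mk}$, we know that $F^r(x)_v$ is the sum modulo $mk$ of the initial values over $N_r(v)$, each neighbour being counted $t\cdot mk + 1$ times for some nonnegative integer $t$; in particular every such multiplicity is $\equiv 1 \pmod{mk}$, hence also $\equiv 1 \pmod k$. Combining this with $G^r = \pi_{mk,k}\circ F^r$, all multiplicities collapse to $1$ modulo $k$, so $G^r(x)_v$ equals the sum modulo $k$ of $x$ over $N_r(v)$. Choosing $r$ at least the diameter of the interaction graph of $F$ makes $N_r(v)$ the full vertex set, whence $G^r(x)$ is the uniform configuration carrying the sum modulo $k$ of $x$; this shows $G$ decides and solves the $k$-parity problem, and it does so within $\mathrm{diam}(F)$ steps, which, because the cliques of $F$ have size $mk+1$ and hence yield a smaller diameter for a given $n$, is exactly the promised reduction in convergence time.

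The step I expect to require the most care is the membership claim $\in CT^k$ rather than the functional fact that $G$ solves the sum-modulo-$k$ problem. The interaction digraph of $G$ coincides with that of $F$, whose cliques have size $mk+1$, whereas the explicit construction of $CT^k$ produces cliques of size $k+1$; no influence disappears under the projection, since each clique coefficient is $1$ and remains nonzero modulo $k$. I would therefore either read the conclusion as \emph{$G$ is a $CT^k$-style solution}, or, to justify literal membership, verify that $G$ inherits the two structural invariants that actually drive the earlier convergence proof: every automaton anchors the clique it belongs to (auto-regulation), and the iterated walk-multiplicities are $\equiv 1 \pmod k$. Establishing that these invariants suffice to place $G$ in the intended family---and in particular clarifying whether $CT^k$ is meant to admit clique sizes that are larger multiples of $k$ plus one---is the one point where the proof must be pinned down rather than computed.
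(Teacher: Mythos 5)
Your proposal is correct and takes essentially the same approach as the paper, whose one-sentence proof simply projects the sum-modulo-$mk$ neighbourhood formula of the preceding theorem down to modulo $k$. The two points you add---the explicit induction giving $(\pi_{mk,k}\circ F)^r = \pi_{mk,k}\circ F^r$ via the homomorphism property, and the caveat that the projected network retains cliques of size $mk+1$ so that ``$\in CT^k$'' must be read as ``solves the $k$-parity problem'' rather than literal membership in the constructed family---are genuine details the paper leaves implicit, and you resolve both in the way the paper intends.
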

\begin{proof}
  Since after $r$ steps every automaton in $F$ computes the
  sum modulo $mk$ of its neighbourhood of radius $r$, it also follows
  that every automaton in $\pi_{mk,k} \circ F$ computes the sum modulo
  $k$ of its neighbourhood of radius $r$ in the same time.
\end{proof}

The temporal evolution of a solution to the parity problem that converges twice as
fast compared to the base solution
is illustrated in Figure~\ref{fig-ca-fastparity}.

Hence for a given network with $n$ automata we can cut the convergence time of the network by
a factor of $m$ by simply considering the solution for the alphabet of size
$mk$, and projecting down. The trade-off of this acceleration is that this new
solution needs to be constructed out of cliques of size $mk + 1$,
which means that each automaton has at least $mk + 1$ neighbours.
In other terms, the number of edges in the interaction graph grows as a polynomial of $k$.

\begin{figure}
  \begin{center}
    \includegraphics{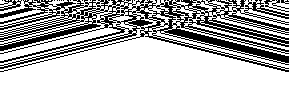}
  \end{center}
  \caption{
    \label{fig-ca-fastparity}
    Temporal evolution of a faster parity solution of size 289, with convergence time 72, flowing downwards.
    This solution is the combination of the solution $CT^4_{289}$ together with
    a projection to the binary alphabet. It is guaranteed to converge twice
    as fast compared to any solution in $CT^2_{289}$.
  }
\end{figure}

This reasoning reaches its natural conclusion by simply taking $k = n - 1$
and then projecting down, in which case the interaction digraph simply
becomes the complete graph of size $n$, in which case computing the
parity of the initial configuration only takes one step.

All of these observations also apply to our solutions to the $k$-synchronisation
problem. The temporal evolution of such a solution in the binary synchronisation case
projected down from the alphabet of size $4$ is illustrated in
Figure~\ref{fig-ca-fastsync}. 


\begin{figure}
  \begin{center}
    \includegraphics{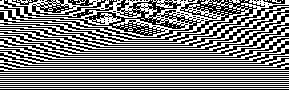}
  \end{center}
  \caption{
    \label{fig-ca-fastsync}
    Temporal evolution of a faster synchronisation solution of size 289,
    with convergence time 72, flowing downwards. It has been obtained from a solution over
    the alphabet $\{0, 1, 2, 3\}$ combined with a projection to the
    binary alphabet.
  }
\end{figure}

\section{Concluding remarks}

In this paper we presented solutions to the parity and
synchronisation problems that are generalisable to any finite alphabet,
using the flexible geometry of automata networks.
The convergence time of these solutions is easy to compute and is always
the diameter of their interaction graph. Further results are provided
to accelerate this convergence time by projecting down a solution over
an alphabet the size of which is a multiple of the desired solution's alphabet.

Figures \ref{fig-ca-parity}, \ref{fig-ca-synchronisation}, \ref{fig-ca-3arity}, \ref{fig-ca-3sync}, \ref{fig-ca-fastparity} and \ref{fig-ca-fastsync} all use configurations of size 289 specifically
because 289 is one plus a multiple of $3$ and $4$, which allows these configurations
to work in the context of their specific problem, while keeping a coherent pixel resolution
throughout the illustrations of the paper.

The geometry of our solutions is more complex than
the geometry of
standard cellular automata lattices, yet they conserve a simple, local
and repeatable form. We believe that this showcases interesting
possibilities for designing new solutions to convergence problems
in the realm of cellular automata.
We believe other problems such as the well-known  density classification task~\cite{density}
could be approached this way.

\section*{Acknowledgements}

P.P.B. thanks the Brazilian agencies CNPq (Conselho Nacional de Desenvolvimento Cient\'{i}fico e Tecnol\'{o}gico) for the research grant PQ 303356/2022-7, and CAPES (Coordena\c{c}\~{a}o de Aperfei\c{c}oamento de Pessoal de N\'{i}vel Superior) for Mackenzie-PrInt research grant no.\ 88887.310281/2018-00. P.P.B. and E.R. jointly thank CAPES for the research grant STIC-AmSud no.\ 88881.694458/2022-01, and P.P. thanks CAPES for the postdoc grant no.\ 88887.833212/2023-00.

\bibliographystyle{plain}
{\small{\bibliography{bib}}}

\end{document}